\theoremstyle{plain}
\newtheorem{proposition}{Proposition}
\theoremstyle{definition}
\newtheorem{definition}{Definition}
\theoremstyle{remark}
\newtheorem{remark}{Remark}
\newcommand{\rhs}{r.h.s.\ }
\newcommand{\wrt}{w.r.t.\ }
\newcommand{\ud}{\mathrm{d}}
\newcommand{\del}{\partial}
\newcommand{\Star}{{\star}}
\newcommand{\GS}{\mathcal{S}}
\newcommand{\R}{\mathbb{R}}
\newcommand{\betrag}[1]{\lvert #1 \rvert}
\newcommand{\skal}[2]{\langle #1 , #2 \rangle}
\title{Divergences in QFT on the Noncommutative Minkowski Space with
Grosse-Wulkenhaar potential}
\author{\speaker{Jochen Zahn}%
\\
	Institut f\"ur Theoretische Physik\\
        Leibniz Universit\"at Hannover, Appelstra{\ss}e 2, 30167 Hannover, Germany\\
        E-mail: \email{jzahn@uni-math.gwdg.de}}
\abstract{We study quantum field theory on the two-dimensional Noncommutative Minkoswki space with a Grosse-Wulkenhaar potential. We explicitly construct the retarded propagator and show that it is not a tempered distribution. This leads to problems when trying to define planar products of such distributions, as they appear in the Yang-Feldman series. At and above the self-dual point, these can no longer be defined, not even at different points. This shows that we do not deal with an ordinary ultraviolet divergence.}
\begin{document}

\section{Introduction}
In quantum field theory on the Moyal space with Euclidean signature, the introduction of the Grosse-Wulkenhaar potential was a spectacular success. With this potential, the noncommutative $\phi^4$ model is renormalizable not only in two~\cite{GrosseWulkenhaar2d}, but also in four spacetime dimensions~\cite{GrosseWulkenhaar4d}. Even better, the model is asymptotically safe (but not free!), since the $\beta$--function is bounded~\cite{GhostGW,GhostDGMR}.

However, very little is known about the Minkowski space versions of these models. A first step in this direction was done by Fischer and Szabo \cite{FischerSzabo}, who defined the free Grosse-Wulkenhaar model on Minkowski space to be given by the Lagrangean
\begin{equation}
\label{eq:Lagrangean}
 L = \phi \Box \phi - \Omega^2 (2 \sigma^{-1} x)^2 \phi^2 + m^2 \phi^2.
\end{equation}
Here $\sigma$ is the real antisymmetric matrix that defines the $\Star$-product through
\begin{equation}
\label{eq:StarProduct}
(f \Star h){\hat \ }(k) = (2\pi)^{-d/2} \int \ud^d l \ e^{-\frac{i}{2} k \sigma l} \hat{f}(k-l) \hat{h}(l),
\end{equation}
where the hat denotes Fourier transformation. This product is used to define interaction terms,
as, e.g., $L_{\mathrm{int}} = g \phi^{\Star 4}$. The real number $\Omega$ plays the role of a coupling constant. The Langrangean \eqref{eq:Lagrangean} is a straightforward generalization of the Lagrangean of Grosse and Wulkenhaar, the only modification being that Euclidean products are replaced by Lorentzian ones in the first and the second term. As in the Euclidean case, the Lagrangean \eqref{eq:Lagrangean} has a remarkable symmetry property in the case $\Omega = 1$. The model is then symmetric under a certain exchange of positions and momenta \cite{LangmannSzabo,GhostGW}. Hence, one calls $\Omega = 1$ the self-dual point. This value is also special for another reason, since it is the fixed point for the $\phi^{\Star 4}$ model in four dimensions \cite{GhostGW,GhostDGMR}.

From the definition of the Lagrangean \eqref{eq:Lagrangean}, it is clear that $\sigma$ must be invertible. Thus, the time direction will not commute with all the spatial directions. In such a setting, quantization has to be done with care. In general, the naive application of Feynman rules derived in the Euclidean setting leads to a violation of unitarity \cite{GomisMehen}. As an approach that does not share this problem, quantization in the Yang-Feldman formalism was proposed in \cite{BDFP02}, and this is the way we proceed here.

As we will see, in the quantum theory thus defined, the self-dual point is also special, but in a less desirable way: In the two-dimensional case, the planar products of propagators that appear in the Yang-Feldman series are ill-defined at and above the self-dual point. The reason is that the retarded propagator grows rapidly (as a Gaussian) in some directions, meaning that appropriate test functions have to be strongly localized in these directions. But these directions do not commute under this product, so that the localization can not be maintained under the planar product. As the difficulty already arises before taking the limit of coinciding points, this is not an ordinary ultraviolet divergence.

We proceed as follows: We recall the setup of the Yang-Feldman series and show how planar and nonplanar products of propagators arise. We then construct and discuss the retarded propagator, which is a crucial ingredient of the Yang-Feldman formalism. From its asymptotic behavior we conclude that planar products are not defined at and above the self-dual point.

This talk summarizes the results of \cite{GWMink}. The discussion of the appropriate test function spaces in Section~\ref{sec:RetProp} is slightly updated \wrt that article.

\section{The Yang-Feldman formalism}
The idea behind the Yang-Feldman formalism is a perturbative and recursive construction of the interacting field in terms of the free incoming field. As an example, we consider the $\phi^{\Star 3}$ model, i.e., we have the equation of motion
\[ P \phi = g \phi \Star \phi, \]
where $P$ is some wave operator. Now one writes the interacting field as a formal power series in the coupling constant $g$, i.e.,
\[ \phi = \sum_{n=0}^\infty g^n \phi_n. \]
Inserting this ansatz into the equation of motion, one obtains
\begin{equation}
\label{eq:YF_eom}
 P \phi_n = \sum_{k=0}^{n-1} \phi_k \Star \phi_{n-1-k}.
\end{equation}
Thus, $\phi_0$ is the free field. We identify it with the incoming field. Higher order fields are thus obtained by convolution with the retarded propagator:
\begin{align*}
\phi_1(x) & = \int \ud^4y \ \Delta_{\mathrm{ret}}(x,y) \phi_0(y) \Star_y \phi_0(y), \\
\phi_2(x) & = \int \ud^4y \ud^4z \ \Delta_{\mathrm{ret}}(x,y) \left\{ [\phi_0(z) \Star_z \phi_0(z) \Delta_{\mathrm{ret}}(y,z)] \Star_y \phi_0(y) + \phi_0(y) \Star_y [\Delta_{\mathrm{ret}}(y,z) \phi_0(z)  \Star_z \phi_0(z)] \right\}.
\end{align*}
Here we indicated the occurence of $\phi_1$ plugged into the \rhs of \eqref{eq:YF_eom} by square brackets. By this procedure, one obtains tree graphs. Loops occur when one considers contractions, i.e., when a product of fields is replaced by the two-point function $\Delta_+$. For $\phi_2$, there are four possible contractions between a field at $y$ and a field at $z$ (contractions between fields at $z$ are tadpoles, which we do not consider). This yields
\begin{multline*}
\phi_2(x) = \int \ud^4y \ud^4z \ \Delta_{\mathrm{ret}}(x,y) \phi_0(z) \left\{ \Delta_{\mathrm{ret}}(y,z) \Star_y \bar{\Star}_z \Delta_+(z,y) + \Delta_{\mathrm{ret}}(y,z) \Star_y \Star_z \Delta_+(z,y) \right. \\
 + \left. \Delta_+(y,z) \Star_y \bar{\Star}_z \Delta_{\mathrm{ret}}(y,z) + \Delta_+(y,z) \Star_y \Star_z \Delta_{\mathrm{ret}}(y,z) \right\}
\end{multline*}
plus the uncontracted part. Here $\bar{\Star}$ is defined by $f \bar{\Star} h = h \Star f$, i.e., with $\sigma$ replaced by $-\sigma$. The expression in curly brackets is what is usually called the self-energy. In terms of the planar and the nonplanar product on $\R^{2d}$,
\begin{align*}
(f \Star_{\mathrm{pl}} h) (y,z) & = f(y,z) \Star_y \bar{\Star}_z h(y,z), \\
(f \Star_{\mathrm{np}} h) (y,z) & = f(y,z) \Star_y \Star_z h(y,z),
\end{align*}
we thus find the one-loop self-energy
\begin{multline}
\label{eq:SelfEnergy}
\Sigma(y,z) = \Delta_{\mathrm{ret}}(y,z) \Star_{\mathrm{pl}} \Delta_-(y,z) + \Delta_{\mathrm{ret}}(y,z) \Star_{\mathrm{np}} \Delta_-(y,z) \\ + \Delta_+(y,z) \Star_{\mathrm{pl}} \Delta_{\mathrm{ret}}(y,z) + \Delta_+(y,z) \Star_{\mathrm{np}} \Delta_{\mathrm{ret}}(y,z),
\end{multline}
where we introduced the notation $\Delta_-(y,z) = \Delta_+(z,y)$. We thus see that in order to do quantum field theory on Moyal space with Lorentzian signature in the Yang-Feldman approach, we have to be able to define planar and nonplanar products of $\Delta_{\mathrm{ret}}$ and $\Delta_{\pm}$.


\section{The retarded propagator}
\label{sec:RetProp}
In the preceeding section, we saw that the retarded propagator is an essential ingredient of the Yang-Feldman formalism. Furthermore, it is uniquely defined, contrary to the Feynman propagator\footnote{Due to the absence of translation invariance, there is no preferred vacuum state.}. Here we explicitly construct it in the massless, two-dimensional case. 

In the absence of the quadratic potential and a mass term, the wave operator for a scalar field is given by $P = \Box = 4 \del_u \del_v$, where we use the light cone coordinates
\begin{equation*}
u = x_0 - x_1, \quad v = x_0 + x_1,
\end{equation*}
The retarded propagator for this wave operator is 
\[
\Delta_{\mathrm{ret}}(u_1, v_1; u_2, v_2) = \tfrac{1}{2} H(u_1-u_2) H(v_1-v_2),
\]
where $H$ is the Heaviside distribution.

In the presence of the quadratic potential, the retarded propagator will no longer be translation invariant, and the above propagator is multiplied with a function of $u_1, v_1, u_2, v_2$. In two dimensions, there is up to multiplication only one real antisymmetric matrix, so that we define
\[
 \sigma = \lambda_{\mathrm{nc}}^2 \epsilon.
\]
In the massless case, the wave operator for the Grosse-Wulkenhaar potential is then given by, cf. \eqref{eq:Lagrangean},
\begin{equation}
\label{eq:WaveOperator_uv}
4 \del_u \del_v + 4 \lambda^{-4} uv,
\end{equation}
where we introduced the notation
\[
 \lambda = \Omega^{-\frac{1}{2}} \lambda_{\mathrm{nc}}.
\]
We have the following proposition, whose straightforward proof can be found in \cite{GWMink}:
\begin{proposition}
The retarded propagator for the wave operator \eqref{eq:WaveOperator_uv} is given by
\begin{align}
\label{eq:Delta_ret}
\Delta_{\mathrm{ret}}(u_1, v_1; u_2, v_2) & = \tfrac{1}{2} H(u_1-u_2) H(v_1-v_2) \sum_{n=0}^\infty (-1)^n \frac{\left( u_1^2 - u_2^2 \right)^n}{2^n \lambda^{2n} n!} \frac{\left( v_1^2 - v_2^2 \right)^n}{2^n \lambda^{2n} n!}  \\
 & = \tfrac{1}{2} H(u_1-u_2) H(v_1-v_2) J_0 \left( \lambda^{-2}\sqrt{(u_1-u_2)(u_1+u_2)(v_1-v_2)(v_1+v_2)} \right). \nonumber
\end{align}
\end{proposition}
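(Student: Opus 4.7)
The plan is to make the ansatz $\Delta_{\mathrm{ret}}(u_1, v_1; u_2, v_2) = \tfrac{1}{2} H(u_1 - u_2) H(v_1 - v_2) F(u_1, v_1; u_2, v_2)$ with $F$ smooth, motivated by the flat massless case recalled just before the statement; this automatically encodes the retarded support condition. I then need to derive the conditions on $F$ that make $\Delta_{\mathrm{ret}}$ a fundamental solution of $P = 4 \del_{u_1} \del_{v_1} + 4 \lambda^{-4} u_1 v_1$, and verify these conditions for the explicit series.

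Applying $P$ to the ansatz using $\del_u H(u) = \delta(u)$ and the Leibniz rule produces four types of contribution: a pure $\delta \cdot \delta$ term fixing $F(u_2, v_2; u_2, v_2) = 1$; mixed $H \cdot \delta$ and $\delta \cdot H$ terms which vanish only if $F(u_1, v_2; u_2, v_2) = 1$ for all $u_1 > u_2$ and $F(u_2, v_1; u_2, v_2) = 1$ for all $v_1 > v_2$, i.e., $F \equiv 1$ along the two past null rays through the source; and a bulk $H \cdot H$ term imposing $(\del_{u_1} \del_{v_1} + \lambda^{-4} u_1 v_1) F = 0$ on $\{u_1 > u_2, v_1 > v_2\}$. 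This is a Goursat (characteristic) initial value problem with a unique smooth solution, so it suffices to exhibit one.

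Next I would verify the given series. The null-ray conditions are immediate: at $u_1 = u_2$ or $v_1 = v_2$ the factor $(u_1^2 - u_2^2)^n$ or $(v_1^2 - v_2^2)^n$ annihilates every summand with $n \geq 1$, leaving $F = 1$. For the PDE I would differentiate termwise, which is legitimate since the series is essentially a Bessel series and converges uniformly on compacta. A direct calculation gives
\[ \del_{u_1} \del_{v_1} F = 4 u_1 v_1 \sum_{n=1}^\infty (-1)^n \frac{(u_1^2 - u_2^2)^{n-1} (v_1^2 - v_2^2)^{n-1}}{4^n \lambda^{4n} ((n-1)!)^2}, \]
and relabelling $n \mapsto n+1$ shows the right-hand side equals $-\lambda^{-4} u_1 v_1 F$, as required.

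Finally, the identification with $J_0$ is a matching of Taylor coefficients: with $J_0(z) = \sum_{n \geq 0} (-1)^n (z/2)^{2n} / (n!)^2$ and $z = \lambda^{-2} \sqrt{(u_1^2 - u_2^2)(v_1^2 - v_2^2)}$ one reproduces the stated series, and $u_1^2 - u_2^2 = (u_1 - u_2)(u_1 + u_2)$ together with the analogous factorisation in $v$ yields the closed form. The only mildly delicate step I expect is the bookkeeping of distributional derivatives across the $H \cdot H$ product cleanly enough to isolate the three conditions on $F$; once those are in hand, verifying the series is purely formal.
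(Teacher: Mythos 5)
Your proposal is correct and is essentially the straightforward verification the paper alludes to (it defers the proof to the companion article arXiv:1005.0541, which proceeds in the same way): make the ansatz $\tfrac{1}{2}H(u_1-u_2)H(v_1-v_2)F$, read off the characteristic (Goursat) data $F=1$ on the two null rays from the $\delta\cdot H$ terms and the bulk equation $\del_{u_1}\del_{v_1}F+\lambda^{-4}u_1v_1F=0$ from the $H\cdot H$ term, and check the power series termwise. Your index shift and the identification with $J_0$ are both correct, and termwise differentiation is justified since $F$ is an entire function of $(u_1^2-u_2^2)(v_1^2-v_2^2)$.
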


\begin{remark}
In a massive theory (without quadratic potential), the retarded propagator is given by
\[
\Delta_{\mathrm{ret}} = \tfrac{1}{2} H(u_1-u_2) H(v_1-v_2) J_0(m \sqrt{(u_1-u_2)(v_1-v_2)}).
\]
Comparison of \eqref{eq:Delta_ret} with this expression shows that effectively we are dealing with a position dependent mass
\[ m^2 = \lambda^{-4} (u_1+u_2)(v_1+v_2). \]
This is the value of the potential at the center of mass of the two points $(u_1, v_1)$ and $(u_2, v_2)$. The problems we will encounter stem from the fact that for $(u_1+u_2) (v_1+v_2) < 0$ the model becomes tachyonic (and ever more so as $(u_1+u_2) (v_1+v_2) \to -\infty$).
\end{remark}

We notice that for imaginary arguments, $J_0$ behaves as 
\[
 J_0(ix) \sim \frac{e^{x}}{\sqrt{2\pi x}}.
\]
Thus, if either $(u_1+u_2)<0$ or $(v_1+v_2)<0$, the retarded propagator grows like a Gaussian. It is thus no tempered distribution. In order to discuss suitable test function spaces on which it can act, we introduce the coordinates
\begin{align*}
 z_1 & = + u_1 - v_2, & z_2 & = - u_2 + v_1, \\
 z_3 & = + u_1 + v_2, & z_4 & = + u_2 + v_1.
\end{align*}
The retarded and the advanced propagator are well defined on Schwartz function that fulfill the bound
\begin{equation}
\label{eq:Exact_Bound}
 \betrag{\del^\beta f} \leq C^\beta e^{- \frac{1+\varepsilon}{4 \lambda^2} \betrag{z_1^2 - z_2^2}}
\end{equation}
for an arbitrarily small positive $\varepsilon$. However, such a bound is hard to control analytically. Thus, we consider Schwartz functions that fulfill the stronger bound
\begin{equation}
\label{eq:GS_Bound}
 \betrag{\del^\beta f} \leq C^\beta e^{- \frac{1+\varepsilon}{4 \lambda^2} (z_1^2 + z_2^2)}.
\end{equation}
This is exactly the definition of a Gelfand-Shilov space:
\begin{definition}
The Gelfand-Shilov space $\GS_{\alpha,A}(\R^4)$ is the space of Schwartz functions that fulfill the bound
\[ \betrag{\del^\beta f(z)} \leq C^\beta e^{-\sum_{i=1}^4 a_i \betrag{z_i}^{1/\alpha_i}}. \]
Here
\[
 a_i = \alpha_i e^{-1} A_i^{-1/\alpha_i} - \delta,
\]
and $\delta > 0$ can be chosen arbitrarily small.
\end{definition}
We thus interpret $\Delta_{\mathrm{ret}}$ and $\Delta_{\mathrm{adv}}$ as distributions on $\GS_{\alpha,A}(\R^4)$, with
\begin{align}
\label{eq:alphaA}
 \alpha & = (\tfrac{1}{2},\tfrac{1}{2},\infty,\infty), &
 A & = \sqrt{2e^{-1}}(\lambda -\varepsilon) (1, 1, \infty, \infty).
\end{align} 
For the discussion in the following section it is a crucial result that the Fourier transforms of such functions are, for $k_3, k_4$ fixed, entire functions whose growth in the imaginary directions is bounded by
\[
\betrag{\hat{f}(k_1+ip_1, k_2+ip_2, k_3, k_4)} \leq C e^{\left( (\lambda-\varepsilon)^2 + \delta \right) \betrag{p}^2},
\]
where $\delta>0$ can be chosen arbitrarily small. The Fourier transformation is a continuous bijection between the spaces $\GS_{\alpha,A}(\R^4)$ and $\GS^{\alpha,A}(\R^4)$.

We remark that the planar and nonplanar commutators for our new coordinates are (all other commutators vanish)
\begin{equation}
\label{eq:Commutators}
 [z_1, z_2]_{\mathrm{pl}} = [z_3, z_4]_{\mathrm{pl}} = [z_1, z_4]_{\mathrm{np}} = [z_3, z_2]_{\mathrm{np}} = 4 i \lambda_{\mathrm{nc}}.
\end{equation}
Thus, we may already anticipate a potential clash with \eqref{eq:GS_Bound}, as our test functions have to be localized in the directions $z_1$ and $z_2$ which do not commute under the planar product. One can expect that such a bound is not stable under the planar product if $\lambda$ is too small.

\section{The planar divergence}

We now want to define the planar and nonplanar $\Star$-products appearing in \eqref{eq:SelfEnergy}. However, even the pointwise (commutative) product is not well-defined in the sense of H\"ormander's product of distributions.
The problems connected to the $\Star$-product and the distributional character of the propagators can be disentagled in the following way\footnote{In \cite{Quasiplanar}, this strategy was pursued for the definition of quasiplanar Wick products.}: In a first step, one defines the $\Star$-product at different points. In the next step, one checks whether the limit of coinciding points makes sense.

The definition of the $\Star$-product of distributions at different points can be done by duality, as proposed in \cite{Soloviev}:
\begin{equation}
\label{eq:Duality}
\skal{\Delta_{\mathrm{ret}} \otimes_{\Star_{\mathrm{pl}}} \Delta_-}{f \otimes h} = \skal{\Delta_{\mathrm{ret}} \otimes \Delta_-}{f \otimes_{\Star_{\mathrm{pl}}} h}.
\end{equation}
As the $\Star$-product at coinciding points \eqref{eq:StarProduct}, the $\Star$-product at different points is best defined in momentum space, i.e., 
\begin{equation}
\label{eq:TwistedProduct}
(f \otimes_{\Star_{\mathrm{pl}}} h) \hat{\ } (k; \tilde{k}) = e^{-2 i \lambda_{\mathrm{nc}}^2 k \sigma_{\mathrm{pl}} \tilde k} \hat{f}(k) \hat{h}(\tilde{k}).
\end{equation}
Using \eqref{eq:Commutators}, we obtain
\[
k \sigma_{\mathrm{pl}} \tilde k = k_1 \tilde{k}_2 - k_2 \tilde{k}_1 + k_3 \tilde{k}_4 - k_4 \tilde{k}_3.
\]
As the retarded propator is an element of $\GS'_{\alpha, A}(\R^4)$, it is natural to expect that the two-point function $\Delta_+$ also lies in a subspace of that space. In order for the \rhs of \eqref{eq:Duality} to be well-defined, we then have to require the \rhs of \eqref{eq:TwistedProduct} to be an element of $\GS^{\alpha \oplus \alpha, A \oplus A}(\R^8)$. For this, we might have to choose $f$ and $h$ from a suitable subset of $\GS_{\alpha, A}(\R^4)$. In \cite{GWMink} it is shown that this is possible if one is far enough below the self-dual point. That this is not possible at or above the self-dual point is shown by the following

\begin{proposition}
For $\lambda_{\mathrm{nc}} \geq \lambda$ and $\alpha$, $A$ as in \eqref{eq:alphaA}, there are no nontrivial $f,h \in \GS_{\alpha, A}(\R^4)$, such that \eqref{eq:TwistedProduct} is the Fourier transform of an element of $\GS_{\alpha \oplus \alpha, A \oplus A}(\R^8)$.
\end{proposition}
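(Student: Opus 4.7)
The plan is to derive a growth/decay incompatibility by restricting to a carefully chosen slice in momentum space. Assume for contradiction that nontrivial $f, h \in \GS_{\alpha, A}(\R^4)$ are given with $F := f \otimes_{\Star_{\mathrm{pl}}} h$ in $\GS_{\alpha \oplus \alpha, A \oplus A}(\R^8)$; then $\hat F(k; \tilde k) = e^{-2i\lambda_{\mathrm{nc}}^2 k \sigma_{\mathrm{pl}} \tilde k}\hat f(k)\hat h(\tilde k)$ extends to an entire function in the first two components of $k$ and $\tilde k$, with Gaussian bound $C e^{b(|p|^2 + |\tilde p|^2)}$, $b := (\lambda - \varepsilon)^2 + \delta$. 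I would restrict to the slice $k_2 = k_3 = k_4 = \tilde k_1 = \tilde k_3 = \tilde k_4 = 0$ and continue $k_1 = ip_1$ while keeping $\tilde k_2 \in \R$: this isolates the single planar commutator $[z_1, z_2]_{\mathrm{pl}} = 4 i \lambda_{\mathrm{nc}}$, the twist acquires modulus $e^{2\lambda_{\mathrm{nc}}^2 p_1 \tilde k_2}$, and the ambient bound reduces to
\begin{equation*}
|\hat f(ip_1, 0, 0, 0)|\,|\hat h(0, \tilde k_2, 0, 0)|\,e^{2\lambda_{\mathrm{nc}}^2 p_1 \tilde k_2} \le C\,e^{b\,p_1^2}.
\end{equation*}
A harmless real shift of the suppressed components (absorbed as a constant phase) ensures that both restricted functions are nontrivial; otherwise $f$ or $h$ would vanish by Fourier inversion.

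Since $\alpha_1 = \alpha_2 = 1/2$ is self-dual, both restrictions are nontrivial entire functions of order $\le 2$ and Gaussian type $\le b$. Let $b_f \in (0, b]$ denote the Gaussian growth type of $p_1 \mapsto \hat f(ip_1, 0, 0, 0)$ on the imaginary axis and $a > 0$ the Gaussian real-axis decay rate of $\tilde k_2 \mapsto \hat h(0, \tilde k_2, 0, 0)$. Jensen's formula yields a sequence $p_1^{(n)} \to \infty$ with $|\hat f(ip_1^{(n)}, 0, 0, 0)| \ge e^{(b_f - o(1))(p_1^{(n)})^2}$. Inserting this into the displayed inequality and maximising the left-hand side in $\tilde k_2$ by completing the square at $\tilde k_2 = \lambda_{\mathrm{nc}}^2 p_1 / a$ gives, in the limit,
\begin{equation*}
\lambda_{\mathrm{nc}}^4 \;\le\; a\,(b - b_f).
\end{equation*}

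The contradiction closes by a sharp Paley-Wiener-type rigidity: a nontrivial entire function of order $\le 2$ with imaginary-direction growth bounded by $e^{b p^2}$ cannot decay on the real axis strictly faster than $e^{-b k^2}$, hence $a \le b$. Indeed, multiplying $\hat h(0, \tilde k_2, 0, 0)$ by $e^{b \tilde k_2^2}$ would, under the assumption $a > b$, produce an entire function of order $\le 2$ bounded on the real axis and decaying on the imaginary axis, which Hadamard factorisation forces to vanish. Combined with $b_f \ge 0$ this yields $\lambda_{\mathrm{nc}}^4 \le b^2$, i.e.\ $\lambda_{\mathrm{nc}}^2 \le (\lambda - \varepsilon)^2 + \delta$; choosing $\varepsilon, \delta$ with $\delta < 2\lambda\varepsilon - \varepsilon^2$ (possible since both are ``arbitrarily small'') makes $(\lambda - \varepsilon)^2 + \delta < \lambda^2$ and contradicts $\lambda_{\mathrm{nc}} \ge \lambda$. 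The main obstacle is the Jensen-type lower bound on $|\hat f(ip_1^{(n)}, 0, 0, 0)|$: extracting a sequence along which the lower bound $e^{(b_f - o(1))(p_1^{(n)})^2}$ holds, compatible with the optimising choice $\tilde k_2 \sim p_1^{(n)}$, requires navigating the zero set of $\hat f$ via the value-distribution theory of entire functions of finite order.
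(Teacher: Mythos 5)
Your slice reduction and the final arithmetic are sound: restricting to $k=(ip_1,0,0,0)$, $\tilde k=(0,\tilde k_2,0,0)$ does isolate the planar commutator, the twist does acquire modulus $e^{2\lambda_{\mathrm{nc}}^2 p_1\tilde k_2}$, and \emph{if} one had the asymptotics $\betrag{\hat f(ip_1)}\sim e^{b_f p_1^2}$ and $\betrag{\hat h(\tilde k_2)}\sim e^{-a\tilde k_2^2}$ at linked points, completing the square would indeed give $\lambda_{\mathrm{nc}}^4\le a(b-b_f)\le b^2$ and hence the contradiction with $\lambda_{\mathrm{nc}}^2\ge\lambda^2>b$. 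But the step you yourself flag as ``the main obstacle'' is a genuine gap, not a technicality: you need a \emph{lower} bound on $\betrag{\hat f}$ along the imaginary axis and a \emph{lower} bound on $\betrag{\hat h}$ on the real axis, and you need them \emph{simultaneously} at the coupled points $\tilde k_2=\lambda_{\mathrm{nc}}^2 p_1^{(n)}/a$. The type $b_f$ and the decay rate $a$ are defined as a $\limsup$ and a $\liminf$ respectively, each attained along its own sequence; nothing forces these sequences to be compatible, and $\hat h$ may well vanish (or dip arbitrarily low) exactly at the optimising points. Repairing this requires minimum-modulus estimates with control of exceptional sets for entire functions of order $2$, which is precisely the machinery your sketch defers. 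A second, smaller flaw: the rigidity claim $a\le b$ is not a consequence of ``Hadamard factorisation forces a function bounded on both axes to vanish'' --- $e^{-iz^2}$ is entire of order $2$, bounded on both coordinate axes, and nonzero. What saves the statement is that $e^{b\tilde k_2^2}\hat h(\tilde k_2)$ \emph{strictly decays} on one axis while being bounded on the other, so the indicator inequality $h(\theta)+h(\theta+\pi/2)\ge 0$ for order-$2$ functions is violated unless the function vanishes; you need that sharper form of Phragm\'en--Lindel\"of, not boundedness on two rays alone.

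The paper's proof avoids all of these lower-bound issues by a different device: instead of optimising over $\tilde k$ separately, it substitutes $\tilde l=i\sigma_{\mathrm{pl}}^{-1}l$, which couples the two factors into a \emph{single} entire function $F(l)=e^{bl^2}\hat f(l)\hat h(i\sigma_{\mathrm{pl}}^{-1}l)$ of one (two-component) complex variable. The twist then contributes $e^{2\lambda_{\mathrm{nc}}^2(\betrag{p}^2-\betrag{k}^2)}$, so the two hypotheses \eqref{eq:Ineq1} and \eqref{eq:Ineq2} yield, using only \emph{upper} bounds, that $F$ decays on the real axis and is bounded on the imaginary axis while having order-$2$ growth in between; a single application of the Phragm\'en--Lindel\"of principle then kills $F$, hence $f$ or $h$. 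No value-distribution theory, no compatibility of sequences, no case analysis on whether $a$ or $b_f$ exist. I would recommend you adopt that substitution: it turns your heuristic into a complete proof while keeping your (correct) identification of where the planar commutator clashes with the Gaussian localisation.
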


\begin{proof}
We fix $k_3, k_4, \tilde{k}_3, \tilde{k}_4$, so that $k=(k_1, k_2)$, $\tilde k = (\tilde k_1, \tilde k_2)$, and analogously for $p$, $\tilde p$. It suffices to show that the bounds (with $b = (\lambda-\epsilon)^2 + \delta$)
\begin{gather}
\label{eq:Ineq1}
\betrag{\hat f(k+ip)} \leq c e^{b \betrag{p}^2}, \quad \betrag{\hat h(k+ip)} \leq c' e^{b \betrag{p}^2}, \\
\label{eq:Ineq2}
\betrag{(f \otimes_{\Star_{\mathrm{pl}}} h) \hat{\ } (k+ip; \tilde{k}+i\tilde{p})} \leq C e^{b ( \betrag{p}^2 + \betrag{\tilde{p}}^2 )},
\end{gather}
can not be fulfilled simultaneously for $\lambda_{\text{nc}}^2 \geq \lambda^2 > b$ ($\delta$ can always be chosen small enough in order to fulfill this inequality). With $l = k + i p$ we have 
\begin{equation*}
\betrag{(f \otimes_{\Star_{\mathrm{pl}}} h) \hat{\ } (l; \tilde{l})} = e^{2 \lambda_\mathrm{nc}^2 ( p \sigma_\mathrm{pl} \tilde k + k \sigma_\mathrm{pl} \tilde p)} \betrag{\hat f(l) \hat h(\tilde l)}.
\end{equation*}
Using \eqref{eq:Ineq2}, we obtain
\[
\betrag{\hat f(l) \hat h(\tilde l)} \leq C e^{-2 \lambda_\mathrm{nc}^2 ( p \sigma_\mathrm{pl} \tilde k + k \sigma_\mathrm{pl} \tilde p)} e^{b ( \betrag{p}^2 + \betrag{\tilde{p}}^2 )}.
\]
Setting $\tilde l = i \sigma_{\mathrm{pl}}^{-1} l$, we thus find
\begin{equation*}
\betrag{ \hat{f}(l) \hat{h}(i \sigma_{\mathrm{pl}}^{-1} l)} \leq C e^{- ( 2 \lambda_{\mathrm{nc}}^2 - b ) \betrag{k}^2 + ( 2 \lambda_{\mathrm{nc}}^2 + b ) \betrag{p}^2}.
\end{equation*}
But from \eqref{eq:Ineq1} we obtain
\begin{equation*}
\betrag{ \hat{f}(l) \hat{h}(i \sigma_{\text{pl}}^{-1} l)} \leq C' e^{b ( \betrag{k}^2 + \betrag{p}^2 )}.
\end{equation*}
Thus, for $\lambda_{\text{nc}}^2 > b$, the entire function $F(l) = e^{bl^2} \hat{f}(l) \hat{g}(i \sigma_{\text{pl}}^{-1} l)$ is bounded on the real and the imaginary axis and grows with order 2 in between. By the Phragm\'en-Lindel\"of principle, it vanishes.
\end{proof}

We have thus shown that at and above the self-dual point the planar $\Star$-product at different points can not be defined via duality on elements of $\GS'_{\alpha,A}(\R^4)$. We interpreted $\Delta_{\mathrm{ret}}$ as an element of that space, so we expect $\Delta_+$ to also lie in a subspace of that space. It follows that the products of distributions appearing in the Yang-Feldman series do not exist, \emph{not even at different points}. In this precise sense, we do not deal with an ultraviolet divergence. We note that if the planar $\Star$-product of two such distributions is not well-defined, then the same is true for higher order products. Thus, the same problems appear for any other polynomial interaction term, in particular also for the $\phi^{\Star 4}$ model. We also note that due to $[x_1, x_2]_{\text{np}} = 0$, no such restrictions occur for nonplanar $\Star$-products.

\begin{remark}
A weak point in the above analysis was the replacement of the bound \eqref{eq:Exact_Bound} by the stronger bound \eqref{eq:GS_Bound}. Thus, one might wonder whether one misses some test functions for which the construction works. However, also a formal direct calculation of the products $\Delta_{\mathrm{ret}} \Star_\mathrm{pl} \Delta_{\mathrm{ret}}$ and $\Delta_{\mathrm{ret}} \Star_\mathrm{pl} \Delta_+$ fails at and above the self-dual point\footnote{In the absence of translation invariance, $\Delta_+$ is not unique, so a reasonable choice has to be taken here.} \cite{GWMink}. One finds (as the coefficient of the $n = 0$ term in the expansion analogous to \eqref{eq:Delta_ret}) the geometric series
\[
\sum_{m=0}^\infty \Omega^{4m},
\]
which clearly diverges at and above the self-dual point.
\end{remark}

\begin{remark}
The above restriction to the massless case is just a matter of convenience. In the massive case, the retarded propagator will no longer be given in a simple analytic form as \eqref{eq:Delta_ret}. However, the quadratic potential will always dominate the mass term sufficiently far away from the origin. Thus, the long distance behavior of the retarded propagator will not change when a mass term is introduced. Hence, the above argument still applies.
\end{remark}

\begin{remark}
\label{rem:Eigenfunctions}
At the self-dual point, there is a natural basis of eigenfunctions of the wave operator that are orthonormal \wrt the $\Star$-product. This is analogous to the harmonic oscillator basis used in the Euclidean case \cite{GrosseWulkenhaar2d,GrosseWulkenhaar4d}, the main difference being that in the present case the basis is continuous, as the spectrum of the wave operator is the whole real line. This difference however, is crucial. It means that in loop calculations the sums have to be replaced by integrals and the Kronecker $\delta$'s at the vertices by Dirac $\delta$'s. In planar subdiagrams this leads to the occurence of squares of Dirac $\delta$'s. Thus, also in this setting, one finds a divergence in planar diagrams at the self-dual point \cite{GWMink}. However, contrary to the approach described above, only the self-dual point can be treated in that way. Furthermore, the localization properties of the eigenfunctions are not known, so it is not clear how to define a retarded propagator with them, and the reason for the occurence of the divergence remains somewhat obscure.
\end{remark}

\section{Conclusion}

We discussed noncommutative field theory with a Grosse-Wulkenhaar potential on the the two-dimensional Minkowski space. We found the retarded propagator for this potential and showed that it grows rapidly in some directions. From its growth rate and the fact that the divergent directions do not commute under the planar $\Star$-product, we concluded that the planar $\Star$-products appearing in the Yang-Feldman series do not exist at and above the self-dual point. As the problem appears already before considering the limit of coinciding points, this is clearly no ultraviolet problem.

In our opinion, the appearance of this new type of divergences is an interesting phenomenon that deserves more detailed studies. Of particular importance is the study of the four-dimensional case, as there the self-dual point is a fixed point of the Euclidean model. A treatment in terms of the eigenfunctions mentioned in Remark~\ref{rem:Eigenfunctions} leads to the same problems as in the two-dimensional case. Preliminary results suggest that this is also true in position space. In that case, it would be important to understand why these problems are absent in the Euclidean setting. If the divergences persist in four dimensions, then one should think about ways to renormalize the model. However, due to the uncommon type of the divergences, it is not clear whether such a program can be successful.


\end{document}